\renewcommand{\angle}[1]{\langle{#1}\rangle}
\newcommand{\lef}{\ensuremath\mathsf{left}}
\newcommand{\rig}{\ensuremath\mathsf{right}}
\newcommand{\rmb}{\ensuremath\mathsf{rmb}}
\newcommand{\Sum}{\ensuremath\mathsf{sum}}
\newcommand{\Update}{\ensuremath\mathsf{update}}
\newcommand{\Access}{\ensuremath\mathsf{access}}
\newcommand{\Select}{\ensuremath\mathsf{select}}
\newtheorem{theorem}{Theorem}
\newtheorem{lemma}[theorem]{Lemma}
\newtheorem{corollary}[theorem]{Corollary}
\newenvironment{proof}{

\noindent{\bf Proof:}} {\hfill$\blacksquare$\medskip}
\title{Partial Sums on the Ultra-Wide Word RAM\footnote{An extended abstract appeared at the \emph{16th Theory and Applications of Models of Computation}\cite{BG2020}}}
\author{Philip Bille \\\texttt{phbi@dtu.dk} \and Inge Li G{\o}rtz \\\texttt{inge@dtu.dk} \and Frederik Rye Skjoldjensen \\\texttt{f.skjoldjensen@gmail.com}}
\date{}
\begin{document}
\maketitle

\begin{abstract}
	We consider the classic partial sums problem on the ultra-wide word RAM model of computation. This model extends the classic $w$-bit word RAM model with special ultrawords of length $w^2$ bits that support standard arithmetic and boolean operation and scattered memory access operations that can access $w$ (non-contiguous) locations in memory. The ultra-wide word RAM model captures (and idealizes) modern vector processor architectures.

	Our main result is a new in-place data structure for the partial sum problem that only stores a constant number of ultrawords in addition to the input and  supports operations in doubly logarithmic time. This matches the best known time bounds for the problem (among polynomial space data structures) while improving the space from superlinear to a constant number of ultrawords. 
	Our results are based on a simple and elegant in-place word RAM data structure, known as the Fenwick tree. Our main technical contribution is a new efficient parallel ultra-wide word RAM implementation of the Fenwick tree, which is likely of independent interest. 
\end{abstract}

\section{Introduction}
Let $A[1,\ldots, n]$ be an array of integers of length $n$. The \emph{partial sums problem} is to maintain a data structure for $A$ under the following operations: 

\begin{itemize}
    \item $\Sum(i)$: return $\sum_{k = 1}^i A[k]$.
    \item $\Update(i, \Delta)$: set $A[i] \leftarrow A[i] + \Delta$. 
\end{itemize}

The partial sums problem is a classic and well-studied data structure problem~\cite{Fredman1982,FS1989, Fenwick1994,HSS2011, HR2003,HRS1996, RRR2001,PD2006,Dietz1989, Yao1985, BCPS2017, BCCGSVV2018, HF1998, Fredman1981,FMS1997, BFK1981, BG2002, BG2001}. Partial sums is a natural range query problem with applications in areas such as list indexing and dynamic ranking~\cite{Dietz1989}, dynamic arrays \cite{RRR2001, BCCGSVV2018}, and arithmetic coding \cite{Fenwick1994,Ryabko1992}. From a lower bound perspective, the problem has been central in the development of new techniques for proving lower bounds~\cite{Miltersen1999}. In classic models of computation the complexity of the partial sums problem is well-understood with tight  logarithmic upper and lower bounds on the operations~\cite{PD2006}. Hence, a natural question is if practical models of computation capturing modern hardware advances will allow us the overcome the logarithmic barrier.

One such model is the \emph{RAM with byte overlap} (RAMBO) model of computation~\cite{FS1989, Brodnik1995, BCFKM2005}. The RAMBO model extends the standard $w$-bit word RAM model~\cite{Hagerup1998} with special words where individual bits are shared among other words, i.e., changing a bit in a word will also change the bit in the words that share that bit. The precise model depends on the layout of shared bits. This memory architecture is feasible to design in hardware and prototypes have been built~\cite{LMSTBK1999}. In the RAMBO model Brodnik et al.~\cite{BKMN2006} gave a time-space trade-off for partial sums that uses $O(n^{w/2^\tau} + n)$ space and supports operations in $O(\tau)$ time and for a parameter $\tau$, $1 \leq \tau \leq \log \log n$. Here, the $n$ term in the space bound is for the special words with shared bits (organized in a tree layout) and the $O(n^{w/2^\tau})$ term is for standard words. Plugging in constant $\tau$, this gives an $O(n^{\epsilon w} + n)$ space and constant time solution, for any $\epsilon > 0$. At the other extreme, with $\tau = \log \log n$, this gives an $O(n)$ space and $O(\log \log n)$ time solution.

More recently, Farzan et al.~\cite{FLNS2015} introduced the \emph{ultra-wide word RAM} (UWRAM) model of computation. The UWRAM model also extends the word RAM model, but with special \emph{ultrawords} of length $w^2$ bits. The model supports standard arithmetic and boolean operations on ultrawords and  \emph{scattered} memory access operations that access $w$ locations in memory specified by an ultraword in parallel. The UWRAM captures modern vector processor architectures~\cite{Stephensetal2017, Reinders2013,LNOM2008,CRDI2007}. We present the details of the UWRAM model in Section~\ref{sec:models}. Farzan et al.~\cite{FLNS2015} showed how to simulate algorithms on RAMBO model on the UWRAM model at the cost of slightly increasing space. Simulating the above solution for partial sums they gave a time-space trade-off for partial sums that uses $O(n^{w/2^\tau} + nw\log n)$ space and supports operations in $O(\tau)$ time and for a parameter $\tau$, $1 \leq \tau \leq \log \log n$.  For constant $\tau$, this is $O(n^{\epsilon w} + n w \log n)$ space and constant time, for any $\epsilon > 0$, and for $\tau = \log \log n$ this is $O(n w \log \log n)$ space and $O(\log \log n)$ time.

\subsection{Setup and Results} 

We revisit the partial sums problem on the UWRAM and present a simple new algorithm that significantly improves the space overhead of the previous solutions. Let $A$ be an array of $n$ $w$-bit integers. An \emph{in-place data structure} for the partial sums problem is a data structure that modifies the input array $A$, e.g., by replacing some of the entries in $A$, to efficiently support operations. In addition to the modified array the data structure is only allowed to store $O(1)$ of ultrawords. This definition extends the standard in-place/implicit data structure concept~\cite{Williams1964,MS1980,SS1987,FMP2007,CC2009} to the UWRAM, by allowing a constant number of ultrawords to be stored instead of (standard) words. Clearly, without this modification computation on ultrawords is impossible. As in Farzan et. al.~\cite{FLNS2015} we distinguish between the \emph{restricted UWRAM} that supports a minimal set of instructions on ultrawords consisting of addition, subtraction, shifts, and bitwise boolean operations and the \emph{multiplication UWRAM} that extends the instruction set of the restricted UWRAM with a multiplication operation on ultrawords. We show the following main result:  

\begin{theorem}\label{thm:main}
Given an array $A$ of $n$ $w$-bit integers, we can construct in-place partial sums data structures for $A$ that support $\Sum$ and $\Update$ operations in $O(\log \log n)$ time on a restricted UWRAM.   
\end{theorem}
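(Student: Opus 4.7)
My plan is to give a parallel ultra-wide word RAM implementation of the classical Fenwick tree. Recall that a Fenwick tree replaces $A[i]$ in place with a partial sum over a range determined by the lowest set bit of $i$, and that both $\Sum$ and $\Update$ traverse a path of only $O(\log n)$ indices, obtained from $i$ by repeatedly removing (for sum) or adding (for update) the lowest set bit. Since $\log n \leq w$, the whole path fits into one ultraword of $w^2$ bits with each index occupying a $w$-bit field. The high-level idea is to execute the $O(\log n)$ sequential Fenwick steps as a single batched computation on ultrawords, using scattered memory access for I/O and parallel arithmetic for combining.

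The first step is to compute, in parallel, the sequence of path indices and pack them into one ultraword. For $\Sum(i)$ this is the sequence of prefixes of $i$ under successive clearing of the lowest set bit; for $\Update(i, \Delta)$ it is the analogous sequence under successive addition. I would produce this sequence by first isolating the positions of the set bits of $i$ and then using ultraword shifts, additions, and bitwise operations to emit all partial prefixes into separate fields simultaneously. The second step is a single scattered read that loads the $O(\log n)$ Fenwick values at these positions into a packed ultraword. For $\Update$, I then add $\Delta$ to every field in a single ultraword addition and write the updated values back with a single scattered write. For $\Sum$, I instead reduce the packed fields to a single total by a standard tournament: pair adjacent fields, shift one half into alignment, add, and repeat; after $O(\log \log n)$ rounds the total occupies one field and can be returned.

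The main obstacle is Step~1 on a \emph{restricted} UWRAM, where multiplication is forbidden. Turning the bit pattern of $i$ into an enumerated, packed list of its set-bit positions is typically done by multiplication-based population-count or broadcasting tricks. I would replace these with an $O(\log \log n)$-round parallel-prefix construction on ultrawords, using only shifts, additions, subtractions, and bitwise operations, in the spirit of standard packed-word prefix sums; $\log \log n$ halving rounds suffice because there are at most $w$ field positions to enumerate. A secondary technicality is that the intermediate sums in the tournament for $\Sum$ may overflow a $w$-bit field; I would reserve slightly wider fields of $w + O(\log n)$ bits, which still fit comfortably within a $w^2$-bit ultraword, so that all partial sums during the reduction remain confined to their fields and carries never leak across boundaries.

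Finally, I would verify the in-place guarantee: the Fenwick layout reuses exactly the $n$ words of $A$, and every operation uses only a bounded number of ultraword registers to hold the packed indices, the packed values, and the update parameter $\Delta$. Combining the in-place storage with the $O(\log \log n)$-time parallel implementation described above yields Theorem~\ref{thm:main}.
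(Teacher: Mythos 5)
Your proposal is correct and follows essentially the same route as the paper: an in-place Fenwick tree whose $O(\log n)$-index access path is packed into a single ultraword, materialized via an $O(\log\log n)$-round multiplication-free parallel prefix sum over the set-bit (resp.\ cleared-bit) positions of $i$, combined with one scattered read/write and a logarithmic-depth reduction for the final sum. One small slip: the $O(\log\log n)$ round count comes from there being only $O(\log n)$ fields on the path (since $i \leq n$), not from there being ``at most $w$'' field positions, which would only justify $O(\log w)$ rounds.
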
 
Compared to the previous result, Theorem~\ref{thm:main} matches the $O(\log \log n)$ time bound of Farzan et. al.~\cite{FLNS2015} (with parameter $\tau = \Theta(\log \log n)$ while improving the space overhead from $O(nw\log n)$ to a constant number of ultrawords. This is important in practical applications since modern vector processors have a very limited number of ultrawords available. 

Technically, our solution is based on a simple and elegant in-place word RAM data structure, called the  \emph{Fenwick tree} (see Section~\ref{sec:fenwicktree} for a detailed description). The Fenwick tree support operations in $O(\log n)$ by sequentially traversing an implicit tree structure. We show how to efficiently compute the access pattern on the tree structure in parallel using prefix sum computations on ultrawords. Then, given the locations to access we use scattered memory operations to access them all in parallel. In total, this leads to the exponential improvement of Fenwick trees. The main bottleneck in our algorithm is the prefix sum computation. Interestingly, if we allow multiplication we can compute prefix sums in constant time leading to the following Corollary for the multiplication UWRAM: 
\begin{corollary}\label{cor:multiplication}
	Given an array $A$ of $n$ $w$-bit integers, we can construct in-place partial sums data structures for $A$ that support $\Sum$ and $\Update$ operations in constant time on a multiplication UWRAM.   
\end{corollary}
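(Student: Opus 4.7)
The plan is to bolt onto the proof of Theorem~\ref{thm:main} by replacing its single non-constant subroutine, the prefix sum on an ultraword, with a constant-time implementation that exploits ultraword multiplication. Recall from the proof of Theorem~\ref{thm:main} that both $\Sum$ and $\Update$ on the Fenwick tree are reduced to two components: (i) computing the $O(\log n)$ Fenwick indices to be read or written, packed into a single ultraword, and (ii) executing a scattered memory access at those indices. Component (ii) is already $O(1)$ by definition of the UWRAM's scattered access, while the only step of (i) that is not already constant time is a prefix sum over the $O(\log n)$ small quantities derived from the binary representation of the query position; this is the $O(\log\log n)$ bottleneck.

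My second step is to invoke the classical multiplication-based prefix sum trick, adapted to ultrawords. Suppose $k$ values $v_1,\dots,v_k$ are packed into fields of $b$ bits within one ultraword $X=\sum_{i=1}^{k}v_i\cdot 2^{(i-1)b}$. Let $M=\sum_{j=0}^{k-1}2^{jb}$ be the ``all-ones pattern'' constant. Then $X\cdot M=\sum_{i,j}v_i\cdot 2^{(i-1+j)b}$, so the coefficient of $2^{(\ell-1)b}$ in the product equals $\sum_{i=1}^{\ell}v_i$, i.e.\ field $\ell$ of $X\cdot M$ holds the $\ell$-th prefix sum. A single ultraword multiplication followed by $O(1)$ masks and shifts therefore produces all $k$ prefix sums in constant time.

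Third, I would verify that the field width accommodates every prefix sum without overflow and that the layout fits in one ultraword. The quantities summed in Theorem~\ref{thm:main} are bounded by $n$, hence by $\log n$ bits, and there are $k=O(\log n)$ of them, so each prefix sum is at most $O(\log n\cdot n)$ and a field width of $b=O(\log n)$ bits is enough. The total used width is $k\cdot b=O(\log^2 n)=O(w^2)$, which fits inside a single $w^2$-bit ultraword, so $M$ is a legal ultraword constant (and can be stored in the data structure's $O(1)$ ultrawords of working space) and the product causes no wraparound.

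The main obstacle, and essentially the only thing to check carefully, is the overflow/layout bookkeeping, since once the prefix sums appear correctly in the fields of $X\cdot M$ we can feed them to the scattered memory access exactly as in the proof of Theorem~\ref{thm:main}. Replacing the $O(\log\log n)$ prefix sum subroutine by this $O(1)$ one makes both $\Sum$ and $\Update$ run in constant time, which yields Corollary~\ref{cor:multiplication}.
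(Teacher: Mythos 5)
Your core idea is exactly the paper's: the multiplication-UWRAM case of Lemma~\ref{lem:prefixsum} is precisely the classical trick you describe --- multiply the packed ultraword by the all-ones pattern $(0^{w-1}1)^w$ and mask --- and the corollary is obtained by substituting this $O(1)$ prefix sum into the algorithm of Theorem~\ref{thm:main}. However, your accounting of the bottlenecks has a genuine gap: the prefix sum is \emph{not} the only super-constant subroutine in the restricted-UWRAM algorithm. Step~1 of $\Sum$ constructs the ultraword $I$ containing $\log n$ duplicates of $i$, and on the restricted model this broadcast is done by repeated doubling in $O(\log\log n)$ time (likewise the broadcast of $\Delta$ into $D$ in Step~3 of $\Update$). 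If you literally replace only the prefix-sum subroutine and leave the rest of Theorem~\ref{thm:main}'s algorithm intact, the broadcast still costs $O(\log\log n)$ and the claimed constant bound fails. The repair is immediate --- broadcasting $i$ is a single multiplication by the very same constant $(0^{w-1}1)^w$, as the paper notes --- but it must be stated, since it is a separate use of the multiplication instruction, not a consequence of replacing the prefix sum.

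Two smaller points on your overflow bookkeeping. First, the values being prefix-summed are not always $O(\log n)$-bit quantities: in Step~3 of $\Sum$ the ultraword $F'$ holds $w$-bit Fenwick entries, so the field width there must be a full word ($b=w$), which still fits since there are only $O(\log n)\le w$ fields; the paper handles potential overflow of these sums by assumption (``the integers computed in the prefix sum never exceed the maximum size available in a word''), and your argument should do the same rather than bound them by $n$. Second, packing into $b=O(\log n)$-bit fields as you propose would require repacking before the scattered read, since scattered access addresses are read one per $w$-bit word; using word-aligned fields throughout, as the paper does, avoids this.
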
 
Multiplication (or prefix sum computation) is not an $\textsc{AC}^0$ operation (it cannot be implemented by a constant depth, polynomial size circuit) and therefore likely not practical to implement on ultraword. However, Corollary~\ref{cor:multiplication} shows that we can achieve significant improvements on the UWRAM with special operations. Since UWRAM capture modern processors, we believe it is worth investigating further, and that our work is a first step in this direction. 

\subsection{Outline}
The paper is organized as follows. In Section~\ref{sec:models} and~\ref{sec:fenwicktree} we review the UWRAM model of computation and the  Fenwick tree. In Section~\ref{sec:partialsumUWRAM} we present our UWRAM implementation of the Fenwick tree. Finally, in Section~\ref{sec:extensions} we discuss extensions of the result and open problems.

\section{The Ultra-Wide Word RAM Model}\label{sec:models}
The \emph{word RAM} model of computation~\cite{Hagerup1998} consists of an infinite memory of $w$-bit words and an instruction set of arithmetic, boolean, and memory access instructions such as the ones available in standard programming languages such as $C$. We assume that we can store a pointer into the input in a single word and  hence $w \geq \log n$, where $n$ is the size of the input. The time complexity of a word RAM algorithm is the number of instructions and the space complexity is the number of words used by the algorithm. 

The \emph{ultra-wide word RAM} (UWRAM) model of  computation~\cite{FLNS2015} extends the word RAM model with special ultrawords of $w^2$ bits. We distinguish between the \emph{restricted UWRAM} that supports a minimal set of instructions on ultrawords consisting of addition, subtraction, shifts, and bitwise boolean operations and the \emph{multiplication UWRAM} that additionally supports multiplication. The time complexity is the number of instruction (on standard words or ultrawords) and the space complexity is the number of (standard) words used by the algorithm. The restricted UWRAM captures modern vector processor architectures~\cite{Stephensetal2017, Reinders2013,LNOM2008,CRDI2007}. For instance, the Intel AVX-512 vector extension~\cite{Reinders2013} support similar operations on 512-bit wide words (i.e., a factor of $8$ compared to $64^2 = 4096$). 

\subsection{Word-Level Parallelism}
\begin{figure}[t]
\begin{center}
  \includegraphics[scale=0.5]{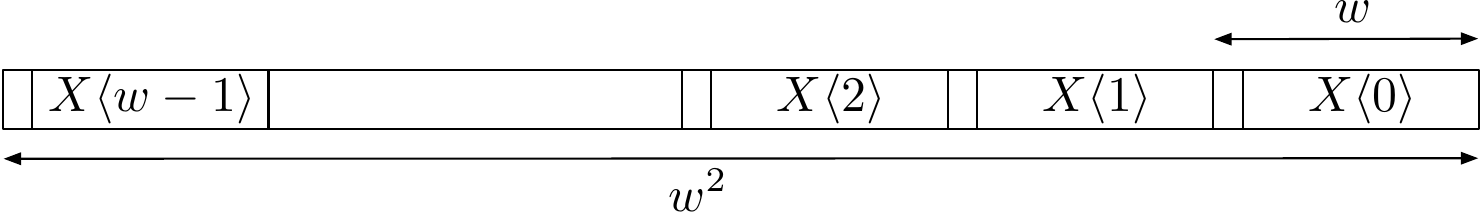}
  \caption{\label{fig:ultraword} The layout of an ultraword of $w^2$ divided into $w$ words each of $w$ bits. The leftmost bit of each word is reserved to be a test bit.}   
  \end{center}
\end{figure}

Due to their similarities, we can adopt many word-level parallelism techniques from the word RAM to the UWRAM.  We briefly review the key primitives and techniques that we will use. 

Let $X$ be an ultraword of $w^2$ bits. We often view $X$ as divided into $w$ words of $w$ consecutive bits each. See Figure~\ref{fig:ultraword}. We number the words in $X$ from right-to-left starting from $0$ and use the notation $X\angle{j}$ to denote the $j$th word in $X$. Similarly, the bits of each word $X\angle{j}$ are numbered from right-to-left starting from $0$. If only the rightmost $\ell \leq w$ words in $X$ are non-zero, we say that $X$ has \emph{length} $\ell$. For simplicity in the presentation, we reserve the leftmost bit of each word to be a \emph{test bit} for word-level parallelism operations. One may always remove this assumption at no asymptotic cost, e.g., by using two words in an ultraword to simulate each single word. 

We now show how to implement common operations on ultrawords that we will use later. Most of these are already available in hardware on modern vector processor architectures. Componentwise arithmetic and bitwise operation are straightforward to implement using standard word-level parallelism techniques from the word RAM . For instance, given ultrawords $X$ and $Y$, we can compute the componentwise addition, i.e., the ultraword $Z$ such that $Z\angle{j} = X\angle{j} + Y\angle{j}$ for $j =0, \ldots, w-1$ by adding $X$ and $Y$ and $\&$'ing with the mask $(01^{w-1})^w$ to clear any test bits (we use exponentiation to denote bit repetition, i.e., $0^31 = 0001$). We can also compare $X$ and $Y$ componentwise by $|$'ing in the test bits of $X$, subtracting $Y$, and masking out the test bits by $\&$'ing with $(10^{w-1})^w$. The $j$th test bit of the result contains a $1$ iff $X\angle{j} \geq Y\angle{j}$. Given $X$ and another ultraword $T$ containing only test bits, we can extract the words in $X$ according to the test bits, i.e., the ultraword $E$ such that $E\angle{j} = X\angle{j}$ if the $j$th test bit of $T$ is $1$ and $E\angle{j} = 0$ otherwise. To do so we copy the test bits by a subtracting $(0^{w-1}1)^w$ from $T$ and $\&$'ing the result with $X$. All of the above mentioned operation take constant time on a restricted UWRAM. Given an ultraword $X$ of length $\ell$, the \emph{prefix sum} of $X$ is the ultraword $P$ of length $\ell$, such that $P\angle{j} = \sum_{k \leq j} X\angle{k}$. We assume here that the integers computed in the prefix sum never exceed the maximum size available in a word such that $P\angle{j}$ is always well-defined. We need the following result. 

\begin{lemma}\label{lem:prefixsum}
Given an ultraword $X$ of length $\ell$ we can compute the prefix sum of $X$ in $O(\log \ell)$ time on a restricted UWRAM and in $O(1)$ time on a multiplication UWRAM. 
\end{lemma}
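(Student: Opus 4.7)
The plan for the restricted UWRAM is to implement a standard parallel-prefix (Hillis--Steele) scan using ultraword primitives. I would maintain an ultraword $Y$ initialized to $X$ and run $\lceil \log_2 \ell \rceil$ rounds of the update $Y \leftarrow Y + (Y \ll 2^i w)$, followed in each round by a mask to clear the test bits so that the addition is componentwise as described in the excerpt. Here the shift by $2^i w$ bits slides each word up by $2^i$ positions and zero-fills the low end. A straightforward induction on $i$ shows that after round $i$ each position satisfies
\[
Y\angle{j} \;=\; \sum_{k=\max(0,\,j - 2^{i+1} + 1)}^{j} X\angle{k},
\]
so the final round produces the prefix sum. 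Each round is a constant number of ultraword shifts, additions, and bitwise operations, giving $O(\log \ell)$ time overall.

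For the multiplication UWRAM, I would use the classical multiply-by-a-replicated-constant trick. Let $M = (0^{w-1}1)^\ell$, so $M = \sum_{k=0}^{\ell-1} 2^{kw}$. Viewing $X$ as the integer $\sum_j X\angle{j}\cdot 2^{jw}$ and computing $Z = X\cdot M$, the coefficient of $2^{mw}$ in $Z$ equals $\sum_{j+k=m,\; 0 \le j,k \le \ell-1} X\angle{j}$, which for $m \le \ell-1$ is exactly the prefix sum $\sum_{j=0}^{m} X\angle{j}$. A single AND with a mask that retains the low $\ell$ words erases the suffix-window sums lying in the upper half of $Z$, yielding the answer in $O(1)$ multiplications, shifts, and bitwise operations.

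The step I expect to need the most care is ensuring that carries do not leak between adjacent word positions in either construction. This is precisely what the reserved test bit is for: as long as each per-word value encountered fits in $w-1$ bits, adding two such values produces at most a $w$-bit quantity, so any carry is absorbed into the test bit and cleared before it can disturb a neighbouring word. Every intermediate quantity in both constructions is a contiguous subrange-sum of the entries of $X$ and is therefore bounded by the full prefix sum, which the hypothesis of the lemma guarantees fits in a word. With this observation in hand, both arguments above go through cleanly and give the claimed bounds.
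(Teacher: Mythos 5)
Your proposal is correct, and the multiplication-UWRAM half is essentially identical to the paper's: multiply by the replicated constant $(0^{w-1}1)^w$ (the paper uses $w$ repetitions rather than your $\ell$, which makes no difference after masking) and observe that, because every convolution coefficient is a contiguous subrange sum bounded by the full prefix sum, no carries cross word boundaries and the low $\ell$ words of the product are the desired prefix sums. For the restricted UWRAM the two arguments differ in which standard parallel scan they instantiate. The paper implements the two-phase Ladner--Fischer scheme: an up-sweep computing subtree sums $b(v)$ over a conceptual balanced binary tree, a down-sweep computing offsets $t(v)$, and a final shift-and-add to convert the resulting exclusive scan into an inclusive one. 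You instead use the Hillis--Steele doubling scan, $Y \leftarrow Y + (Y \ll 2^i w)$ for $i = 0, \ldots, \lceil \log_2 \ell\rceil - 1$, with your window-size induction showing it directly produces the inclusive prefix sum. Both run in $O(\log \ell)$ rounds of $O(1)$ ultraword operations, so the bounds are the same; your version is arguably simpler to state and implement (one shift, one componentwise add, one mask per round, no final correction step), while the paper's two-phase version is the work-efficient variant --- a distinction that matters on a PRAM but is immaterial here, since in this model a round costs $O(1)$ regardless of how many positions it touches. Your closing observation about carries --- that every intermediate per-word value is a contiguous subrange sum, hence bounded by the full prefix sum, which the lemma's standing assumption places within a word --- is exactly the justification needed for the componentwise additions in both halves, and the paper relies on the same assumption.
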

\begin{proof}
First consider the restricted UWRAM. We implement a standard parallel prefix-sum algorithm~\cite{LF1980} (see also the survey by Blelloch~\cite{Blelloch1990}). For simplicity, we assume that $\ell$ is a power of two. The algorithm consists of two phases that conceptually construct and traverse a perfectly balanced binary tree $T$ of height $\log \ell$ whose leaves are the $\ell$ words of $X$. 

Given an internal node $v$ in $T$, let $v_\lef$ and $v_\rig$ denote the left and right child of $v$, respectively. The first phase performs a bottom-up traversal of $T$ and computes for each node $v$ an integer $b(v)$. If $v$ is a leaf, $b(v)$ is the corresponding integer in $X$ and if $v$ is an internal node $b(v) = b(v_\lef) + b(v_\rig)$. The second phase performs a top-down traversal of $T$ and computes an integer $t(v)$. If $v$ is the root then $t(v) = 0$ and if $v$ is an internal node then $t(v_\lef) = t(v)$ and $t(v_\rig) = t(v_\lef) + b(v_\rig)$. After the second phase the integers at the leaves is the prefix sum shifted by a single element and missing the last element. We shift and add the last element to produce the final prefix sum. Since $T$ is perfectly balanced we can implement each level of a phase in constant time using shifting and addition. The final shift and addition of the last element takes constant time. It follow that the total time is $O(\log \ell)$. During the computation we only need to maintain all of the values in a constant number of ultrawords.

Next consider the multiplication instruction set. We can then simply multiply $X$ with the constant $(0^{w-1}1)^w$ and mask out the $\ell$ rightmost words of the result to produce the prefix sum. See Hagerup~\cite{Hagerup1998} for a detailed description of why this is correct. In total this uses $O(1)$ time.  

\end{proof}

\subsection{Memory Access}
The UWRAM supports standard memory access operation to read or write a single word or a sequence of $w$ contiguous words. More interestingly, the UWRAM also supports \emph{scattered} access operations that access $w$ memory locations (not necessarily contiguous) in parallel. Given an ultraword $A$ containing $w$ memory addresses, a \emph{scattered read} loads the contents of the addresses into an ultraword $X$, such that $X\angle{j}$ contains the contents of memory location $A\angle{j}$. Given two ultrawords $A$ and $X$ \emph{scattered write} sets the contents memory location $A\angle{j}$ to be $X\angle{j}$. Scattered memory accesses captures the memory model used by IBM's \emph{Cell} architecture~\cite{CRDI2007}. Scattered memory access operations were also proposed by Larsen and Pagh~\cite{LP2012} in the context of the I/O model of computation. 

\section{Fenwick Trees}\label{sec:fenwicktree}
\begin{figure}[t]
\begin{center}
  \includegraphics[scale=0.5]{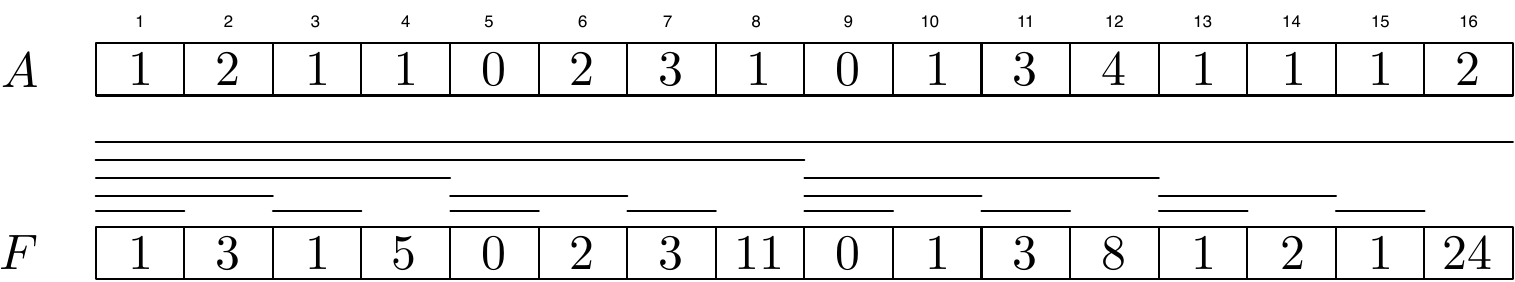}
  \caption{\label{fig:fenwicktree} A array $A$ and the Fenwick tree $F$. The lines above $F$ indicate the partial sum of $A$ stored at the rightmost endpoint of the line. For instance, the $F[12] = A[9] + A[10] + A[11] + A[12] = 0 + 1 + 3 + 4 = 8$.}   
  \end{center}
\end{figure}
Let $A$ be an array of $n$ $w$-bit integers and assume for simplicity that $n$ is a power of two. The Fenwick tree~\cite{Fenwick1994, Ryabko1992} is an in-place data structure that replaces the array $A$ as follows. If $n = 1$, then leave $A$ unchanged. Otherwise, replace all values at even entries  $A[2i]$ by the sum $A[2i - 1] + A[2i]$. Then, recurse on the subarray $A[2, 4, \ldots, n]$. The resulting array $F$ stores a subset of the partial sums of $A$ organized in a tree layout (see Figure~\ref{eq:sequences}). 

 To answer $\Sum(i)$ query, we compute a sequence of indices in $F$ and add the values in $F$ at these indices together. Let $\rmb(x)$ denote the position of the rightmost bit in an integer $x$. Define the \emph{sum sequence} $i^s_1, \ldots, i^s_r$ given by $i^s_1 = i$ and $i^s_j = i^s_{j-1} - 2^{\rmb(i^s_{j-1})}$, for $j = 2, \ldots, r$. The final element $i^s_r$ is $0$.   We compute and return $F[i^s_1] + F[i^s_2] + \cdots + F[i^s_{r-1}]$. For instance, for $i = 13= (1101)_2$ the sum sequence is $13, 12, 8, 0 = (1101)_2, (1100)_2, (1000)_2, (0000)_2$. Hence, $\Sum(13) = F[13] + F[12] + F[8] = 1 + 8 + 11 = 20 = A[1] + \cdots + A[13]$. We access at most $O(\log n)$ entries in $F$ and hence the total time for $\Sum$ is $O(\log n)$. Note that we can always recover the original array $A$ using the $\Sum$ operation, since $A[i] = \Sum(i) - \Sum(i-1)$.

To compute $\Update(i, \Delta)$, we compute a sequence of indices in $F$ and add $\Delta$ to the values in $F$ at each of these indices. Define the \emph{update sequence} $i^u_1, \ldots, i^u_t$ given by $i^u_1 = i$ and $i^u_j = i^u_{j-1} + 2^{\rmb(i^u_{j-1})}$, for $j = 2, \ldots, t$. The final element $i^u_t$ is $2n$. We set $F[i^u_1] = F[i^u_1] + \Delta, \ldots, F[i^u_t] = F[i^u_{t-1}] + \Delta$. For instance, for $i = 13$ the update sequence is $13, 14, 16, 32$. Hence, $\Update(13, 5)$ adds $5$ to $F[13]$, $F[14]$, and $F[16]$. Similar to the $\Sum$ operation, the total running time for $\Update$ is $O(\log n)$.

\section{Partial Sums on the Ultra-Wide Word RAM}\label{sec:partialsumUWRAM}
We now present an efficient implementation of Fenwick trees on the UWRAM model of computation. We only store the Fenwick tree, as the array $F$ described in Section~\ref{sec:fenwicktree} and a constant number of ultraword constants that we use for computation. We first show some basic properties of the sum and update sequences in Section~\ref{sec:sequences}, before presenting our UWRAM implementation of the operations in Sections~\ref{sec:sum} and~\ref{sec:update}. 

\subsection{Computing Sum and Update Sequences}\label{sec:sequences} 
To compute the sum and update sequences we cannot directly apply the recursive definitions, since this would need $\Omega(\log n)$ steps. Instead, we show how to express the sequences as a prefix sum that we can efficiently derive from the input integer $i$. Then, using Lemma~\ref{lem:prefixsum} we will show how to compute it in on the UWRAM in the following sections. 

 Let $i^s_1, \ldots, i^s_r$ and $i^u_1, \ldots, i^u_t$ be the sum sequence and update sequences, respectively, for $i$ as defined in Section~\ref{sec:fenwicktree}. Define the \emph{offset sum sequence} $o^s_{1}, \ldots, o^s_{r-1}$ and \emph{offset update sequence} $o^u_{1}, \ldots, o^u_{t-1}$ for $i$ to be the sequences of differences of the sum and update sequences, respectively, that is, $o^s_{j} = i^s_{j+1} - i^s_{j}$, for $j = 1, \ldots, r-1$ and $o^u_{j} = i^u_{j+1} - i^u_{j}$, for $j = 1, \ldots, t-1$. By definition, we have that 
 \begin{equation}\label{eq:sequences}
i^s_j = i + \left(\sum_{k < j} o^s_k\right)  \qquad \qquad
i^u_j = i + \left(\sum_{k < j} o^u_k\right) 
\end{equation}

We also have that $o^s_j = -2^{\rmb(i^s_j)}$ and hence each sum offset is a power of $2$ corresponding to the rightmost $1$ bit in $i^s_j$. Thus, $o^s_1$ corresponds to the rightmost $1$ in $i^s_1 = i$. Adding $o^s_1 = -2^{\rmb(i)}$ (i.e., subtracting $2^{\rmb(i)}$) "clears" the rightmost $1$ bit in $i$. Thus, $o^s_2$ corresponds to the $1$ bit in $i$ immediately to left of the rightmost $1$ bit. In general, we have that $o^s_j = -2^b$, where $b$ is the position of the $j$th rightmost bit in $i$, for $j=1, \ldots, r-1$. For instance, for $i = 13 = (1101)_2$ the offset sum sequence is $-1, -4, -8$ corresponding to the three $1$ bits in the binary representation of $i$. 

Similarly, for the update offsets, we have that $o^u_j = 2^{\rmb(i^u_j)}$. Hence, $o^u_1$ corresponds to rightmost $1$ in $i$. Adding $o^u_1 = 2^{\rmb(i^u_1)}$ clears the  rightmost consecutive group of $1$ bits in $i$ and flips the following  $0$ bit to $1$. In general, we have that $o^u_j = 2^b$, where $b$ is the position of the $j$th rightmost $0$ to the left of $\rmb(i)$, for $j=2, \ldots, t-1$. For instance, for $i = 13 = (01101)_2$ the offset update sequence is $1, 2, 16$. 

\subsection{Sum}\label{sec:sum}
To compute the $\Sum(i)$, the main idea is to first construct the sum sequence in an ultraword, then use a scattered read to retrieve the entries from $F$ in parallel into another ultraword, and finally sum the entries of this ultraword to compute the final result. We do this in $3$ steps as follows. See Figure~\ref{fig:sumexample} for an example of the computed ultrawords during the algorithm.

\begin{figure}[t]
\begin{center}
  \includegraphics[scale=0.5]{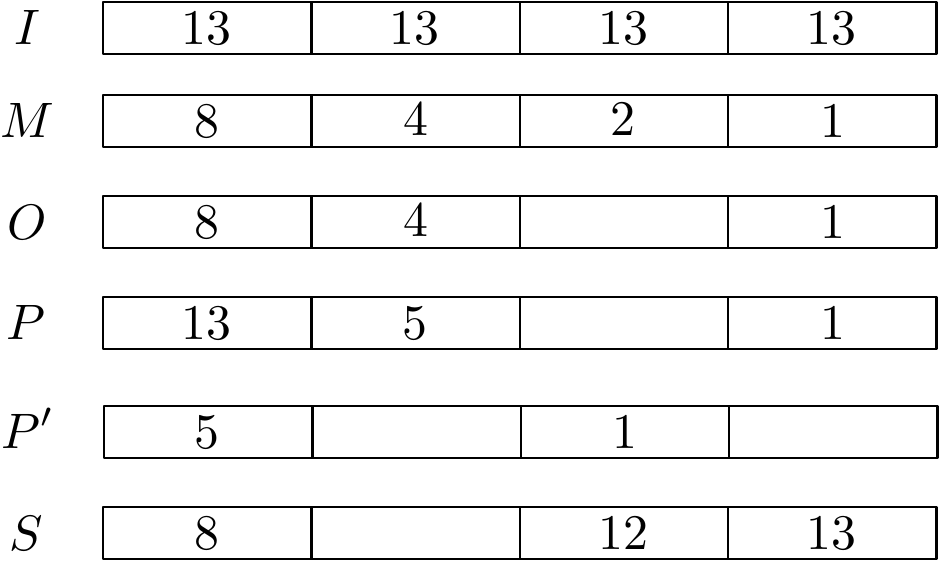}
  \caption{\label{fig:sumexample} Computing the sum sequence for $i = 13 = (1011)_2$. Words with $0$ are left blank. $I$ contains duplicates of $i$. $M$ is a precomputed mask. $O$ is the bitwise $\&$ of $I$ and $M$. $P$ is the prefix sum of the non-zero words in $O$. $P'$ is $P$ shifted left by one word. $S$ is the sum sequence obtained by componentwise subtraction of $P'$ from $I$.}	
\end{center}
\end{figure}

\paragraph{Step 1: Compute Offsets} 
Compute the ultraword $O$ such that $O\angle{j} = 2^j$ if $-2^j$ is an offset for $i$ and $0$ otherwise, i.e., the non-zero entries of $O$ is the offset sequence for $i$. To do so we first construct the ultraword $I$ consisting of $\log n$ duplicates of $i$, i.e., $I\angle{j} = i$ for $j = 1,\ldots, \log n$. We then compute the bitwise $\&$ of $I$ and a mask $M$, such that $M\angle{j} = 2^j$ for $j = 1,\ldots, \log n$, i.e., bit $j$ of $M\angle{j} = 1$ and the other bits of $M\angle{j}$ are $0$. By the discussion in Section~\ref{sec:sequences} the resulting ultraword is $O$.

On the multiplication UWRAM we can construct $I$ in constant time by multiplying $i$ with $(0^{w-1}1)^w$. On the restricted UWRAM we can construct $I$ in $O(\log \log n)$ time by repeatedly doubling using shifts and bitwise $|$. The rest of the computation takes constant time in both models. 


\paragraph{Step 2: Compute Sum Sequence} 
Compute an ultraword $S$ of length $\log n$ whose non-zero entries is the sum sequence $i^s_1, \ldots, i^s_{r-1}$. To do so we first compute the prefix sum $P$ of the non-zero words of $O$, i.e., we compute the prefix sum of $O$ and then extract the words corresponding to non-zero words in $O$. Then we shift $P$ by $1$ word to the left to produce an ultraword $P'$ and finally subtract $P'$ from $I$ to produce an ultraword $S$. By \eqref{eq:sequences} the non-zero words in $S$ is the sum sequence for $i$.

By Lemma~\ref{lem:prefixsum} the prefix sum computation takes constant time on a multiplication UWRAM and $O(\log \log n)$ time on a restricted UWRAM. The remaining steps take constant time. 

\paragraph{Step 3: Compute Sum}
Finally, we compute $F[i^s_1] + F[i^s_2] + \cdots + F[i^s_{r-1}]$. To do so we do a scattered read on $S$ to retrieve $F[i^u_1], \ldots, F[i^u_{s-1}]$ into a single ultraword $F'$ and compute a prefix sum on $F'$. The sum is then the last word in the result. The scattered read takes constant time. The prefix sum computation takes constant time on a multiplication UWRAM and $O(\log \log n)$ time on a restricted UWRAM. We assume here that $F[0] = 0$. If not we may simply temporarily set $F[0] = 0$ during the computation. Also note that it suffices to perform the first phase of the prefix sum computation as discussed in the proof of Lemma~\ref{lem:prefixsum} since we only need the sum of all of the retrieved entries. 

\medskip

In total, the $\Sum$ operation takes constant time on a multiplication UWRAM and $O(\log \log n)$ time on a restricted UWRAM. 

\subsection{Update}\label{sec:update}
We compute $\Update(i, \Delta)$ similar to our algorithm for $\Sum$. We describe how to modify each step of $\Sum$.

In step 1, we modify the computation of the ultraword $O$ such that it now contains the update offsets, that is, $O\angle{j} = 2^j$ if $2^j$ is an update offset for $i$ and $0$ otherwise. To do so we now construct a mask $M$ such that $M\angle{j}$ contains a $0$ in bit $j$ if $j$ is to the left of $\rmb(i)$ and $1$ elsewhere. We then compute a bitwise $|$ of $M$ and $I$ and negate the result. Finally, we set word $\rmb(i)$ of the result to be $2^{\rmb(i)}$. By the discussion in Section~\ref{eq:sequences} the resulting ultraword is $O$. 

In step 2, since $O$ now contains the offsets and not the negative offsets, we change the final subtraction to an addition to produce the update sequence stored in a single ultraword $U$. 

In step 3, we do a scattered read on $U$ to retrieve $F[i^u_1], \ldots, F[i^u_{s-1}]$ into a single ultraword $F'$. We then duplicate $\Delta$ to all words in an ultraword $D$ and add $D$ to $F'$ to produce an ultraword $F''$. Finally, we do a scattered write on $U$ and $F''$ to update $F$.  

The changes are straightforward to implement in the same time as above. Hence, the $\Update$ operation takes constant time on a multiplication UWRAM and $O(\log \log n)$ time on a restricted UWRAM. 

\medskip

In summary, we use $O(\log \log n)$ time on a restricted UWRAM and $O(1)$ time on a multiplication UWRAM for both operation. We only store the Fenwick tree in the array $F$ and a constant number of ultrawords. This completes the proof of Theorem~\ref{thm:main} and Corollary~\ref{cor:multiplication}.

\subsection{Extensions and Open Problems}\label{sec:extensions}
We sometimes also consider the following operations in the context of partial sums: 
\begin{itemize}
	\item $\Access(i)$: return $A[i]$. 
	\item $\Select(j)$: return the smallest $i$ such that $\Sum(i) \geq j$
\end{itemize}
As mentioned $\Access$ is trivial to support since $A(i) = \Sum(i) - \Sum(i-1)$. In contrast, the $\Select$ operation do not seem to easily lend itself to an efficient parallel implementation on the UWRAM. While it is straightforward to implement in $O(\log n)$ time by "top-down" traversal of the Fenwick tree our techniques do not appear be useful to speed up this solution on the UWRAM. We leave it as an open problem to investigate the complexity of the $\Select$ operation on the UWRAM. 

Our results leave the precise relation between UWRAM and RAMBO model of computation open. While Farzan et al.~\cite{FLNS2015} show how to simulate RAMBO algorithms  with a small overhead in space our results show that a direct approach to designing UWRAM algorithms can produce significantly better results. We wonder what the precise relation between the models are and if stronger simulation results are possible. 


\bibliographystyle{abbrv}
\bibliography{paper}

\end{document}